\documentclass{article}

\usepackage[final,main]{neurips_2026}

\usepackage[utf8]{inputenc}
\usepackage[T1]{fontenc}
\usepackage{microtype}
\usepackage{amsmath,amssymb,amsfonts,amsthm,mathtools}
\usepackage{booktabs}
\usepackage{url}
\usepackage{xcolor}
\usepackage{hyperref}
\usepackage{enumitem}
\usepackage{algorithm}
\usepackage{algorithmic}

\theoremstyle{plain}
\newtheorem{theorem}{Theorem}[section]
\newtheorem{proposition}[theorem]{Proposition}
\newtheorem{lemma}[theorem]{Lemma}
\newtheorem{corollary}[theorem]{Corollary}
\theoremstyle{definition}
\newtheorem{definition}[theorem]{Definition}
\newtheorem{assumption}[theorem]{Assumption}
\theoremstyle{remark}

\title{Global Sketch-Based Watermarking for Diffusion Language Models}

\author{%
  Daniel Zhao \\
  Harvard University
}

\begin{document}

\maketitle
\begin{abstract}
Watermarking methods for language models have been studied extensively in the autoregressive setting, where tokens are generated sequentially. These works largely focus on local-context schemes that perturb the next token's distribution as a function of its preceding tokens. In diffusion language models, distributions over many unresolved positions are jointly sampled, allowing additive statistics of the entire sequence to be tractable during generation. We propose a watermark for masked diffusion language models that controls a global, vector-valued sketch representation of the text. Compared to context-dependent watermarking, the sketch formulation decouples detection from the local contexts seen during generation, resulting in an order-agnostic statistic and a watermarking rule which does not manifest as a simple token bias. We analyze the distortion, soundness, and robustness properties of the method. 
\end{abstract}

\section{Introduction}

The rapid adoption of large language models has made content attribution an increasingly urgent problem. To mitigate abuse and compounding data pollution, we require reliable methods to distinguish human-written text from model-generated text \citep{Pudasaini2025AIGeneratedPlagiarismSurvey,Paustian2024}. Watermarking provides a statistical approach to this problem by embedding detectable signatures into the text that are invisible to ordinary readers.

A large body of work studies watermarking for autoregressive (AR) language models, where tokens are conditionally generated left to right. The largest family of watermarking schemes in this category biases the next-token distribution based on a function of the current prefix, typically by partitioning the vocabulary into favored and disfavored sets. Each token becomes fixed once sampled. This ties the watermark to the left-right filtration generated by autoregression which, while natural, makes it difficult to separate the signal from its sampling path.

By contrast, diffusion-based language models (DLMs) generate tokens by denoising corrupted sequences of text simultaneously \citep{austin2021d3pm,Hoogeboom2021Multinomial,li2022diffusionlm}. In discrete masked variants, the model exposes marginal distributions over all undetermined token positions. A watermark can then draw on the entire sequence state rather than a realized prefix, opening up global control strategies that are unavailable, or computationally infeasible, for AR models.
 
An effective watermarking scheme seeks to maximize four complementary properties.
\begin{itemize}
\item \textbf{Correctness.} Under reasonable entropy assumptions, the watermark can be detected with high power for a fixed false positive rate.
\item \textbf{Quality.} The observable behavior of the watermarked model remains essentially unchanged from the base model. This typically means minimizing the divergence between the watermarked and base model distributions, known as distortion. Distortion-free schemes preserve the base distribution exactly. 
\item \textbf{Security.} For an unkeyed user, the effect of the watermark should not be easy to learn or imitate. A related notion to distortion-freeness, but strictly stronger, is \emph{undetectability}, which requires that no unkeyed observer can efficiently distinguish watermarked outputs from base-model outputs \citep{christ2024undetectable}.
\item \textbf{Robustness.} The watermark persists through post-generation edits. A robust scheme makes it expensive for an attacker to remove a watermark, requiring many token changes or costly paraphrasing to degrade detection power.
\end{itemize}

In this preliminary work, we introduce a sketch-based watermark for diffusion language models. The sketch is an additive vector statistic of the output text whose conditional expectation is tractable from the marginals during denoising. We bias those marginals by following the gradient of a detector score evaluated at the current expected sketch. Our design leverages the native geometry of DLMs for order-agnostic, sequence-level control. We prove a chain rule for KL distortion, a key-randomized soundness bound, and bounds on the edit sensitivity of the statistic.

\section{Preliminaries}

\subsection{Masked diffusion language models}
\label{sec:prelim:dlm}

Let $\mathcal{V}$ be a finite vocabulary.  Given a prompt $c$, a diffusion language model (DLM) defines a reverse-time Markov chain over a state space. In this work, we focus on the masked variant with the factorization assumption.

\begin{definition}[Masked diffusion sampler]
\label{def:dlm}
For a prompt $c$ and length $N$, a masked diffusion language model $\mathcal M$
defines a reverse-time Markov process on $(\mathcal V\cup\{\texttt{[MASK]}\})^N$, starting from an initial state $S_T\sim \pi_T(\cdot\mid c)$. At each step $t=T,\dots,1$, let $M_t:=\{i:(s_t)_i=\texttt{[MASK]}\}$ be the set of masked positions. The marginals $p_i^{(t)}(\cdot\mid s_t,c)\in\Delta(\mathcal V)$ are known for each $i\in M_t$. The model proceeds by sampling the masked positions from these marginals conditionally independent of each other, leaving the unmasked coordinates unchanged, and then applying a remasking rule to obtain $S_{t-1}$. We write the final output as $Y:=S_0\in\mathcal V^N$ and $Y\sim \mathcal M^N(c).$
\end{definition}

\begin{definition}[Watermarking scheme]
\label{def:wm_scheme}
A watermarking scheme for $\mathcal M$ is a pair
\[
(\widetilde{\mathcal M}, \mathsf{Detect}),
\]
where, for each key $k$, $\widetilde{\mathcal M}_k$ is a keyed modification of $\mathcal M$, and
\[
\mathsf{Detect}_k:\mathcal V^N\to\{0,1\}
\]
is the corresponding detector. For a prompt $c$ and length $N$, we write $
\widetilde Y \sim \widetilde{\mathcal M}_k^N(c).$
\end{definition}

\subsection{Text sketches}
A \emph{text sketch} is a map \(H:\mathcal V^*\to\mathcal E\), where \(\mathcal E\) is a finite-dimensional real vector space. Sketches are approximate summaries that retain enough information about a text's features to support operations such as similarity search, clustering, duplication detection, and indexing \citep{charikar2002similarity,weinberger2009feature,manning2008ir}. In our watermark, we use a form of Count-Sketch, a randomized linear method widely used in streaming applications. 

\begin{definition}[Count-Sketch]
\label{def:count_sketch}
Fix integers $d,w\ge 1$. For each row $r\in[d]$, instantiate
\[
h_r:\mathcal V\to [w],
\quad
\sigma_r:\mathcal V\to\{-1,+1\},
\]
a hash function and a sign function, respectively. Define the per-token feature map $\phi:\mathcal V\to \mathbb R^{d\times w}$ by
\[
\phi(v)_{r,b}:=\sigma_r(v)\,\mathbf 1[h_r(v)=b],
\quad
r\in[d],\ b\in[w].
\]
Then, for a text $y=(y_1,\dots,y_N)\in\mathcal V^*$, its sketch is
\begin{equation}
H(y)
:=
\sum_{j=1}^N \phi(y_j)
\in \mathbb R^{d\times w}.
\label{eq:countsketch-H}
\end{equation}
\end{definition}
Note that $H$ is additive over tokens, a property used in the embedding and robustness analysis.

\section{Related Work}

\subsection{Autoregressive watermarking}

Broadly, existing AR watermarking schemes use one of three mechanisms to inject signal: biasing the next-token distribution, replacing sampling randomness with pseudorandom functions, or rejection sampling, which resamples candidates until a constraint is satisfied. Kirchenbauer et al.\ introduced the canonical red-green biasing approach, which uses a hash of the preceding context to partition the vocabulary into two sets, and then boosts the logits of the "green'' set. Variants in this family optimize for robustness through globally fixed partitions \citet{zhao2023provable} or semantic context hashing \citep{guo2024contextaware}. Others avoid biasing logits directly, but use sampling choices to change the distribution. SynthID-Text, for example, performs tournament sampling over candidates to favor tokens that have large secret score values \citep{dathathri2024scalable}.

The second family of watermarks preserves the marginal distribution of each token, and instead replaces the sampler's randomness with a pseudorandom function. The approach taken by Kuditipudi et al.\ uses a long, finite, shared pseudorandom sequence to produce distortion-free samples \citep{kuditipudi2024robust}. Detection becomes a robust alignment problem. Christ et al.\ design undetectable watermarks by ensuring the pseudorandom sequence is not exactly reused on multiple responses \citep{christ2024undetectable}. Aaronson’s Gumbel-style construction is often cited as a precursor, though common presentations of the scheme are not distortion-free \citep{aaronson2023watermarking}.

Rejection-sampling approaches repeatedly propose candidate tokens or sequences of tokens from the base model, and only accept when a constraint is satisfied. This allows the watermark signal to live in more robust, higher-order features than individual tokens, such as semantic meaning \citep{hou2023semstamp,dabiriaghdam2025simmark,huo2025pmark}, and further supports public-key cryptography \citet{fairoze2023publiclydetectable}.

\subsection{Limitations of AR Watermarking}

\textbf{Single-pass control bottleneck.}
In most instances, AR watermarks accumulate evidence through a sequence of realized, prefix-conditioned decisions. A token's contribution to the detector is fixed after it is sampled. Furthermore, the scheme does not have access to a joint source of uncertainty over the whole sequence. Therefore—with the notable exception of some distortion-free schemes—the watermark signal must be carried by local perturbations, limiting the degrees of freedom available for embedding while precluding direct control of sequence-level statistics. Rejection-sampling approaches can partially mitigate the one-pass constraint, but only by expending extra compute via an outer loop \citep{hou2023semstamp, fairoze2023publiclydetectable}.

\textbf{Security of local hashing schemes.}
A second limitation of local hashing schemes is security. Since these watermarks induce repeatable, context-conditioned regularities per token, they have been shown to be detectable by unkeyed users \citep{gloaguen2025blackbox}; for the red-green scheme, an adaptive adversary can estimate the watermark effect and spoof it without even having access to the base model or a detector oracle \citep{jovanovic2024watermarkstealing, gu2024learnability}. 

\textbf{Synchronization robustness.}
Most AR watermark detectors rely on the sequential token ordering to synchronize their internal states with the observed text, making them vulnerable to insertion, deletion, and reordering attacks. Targeted attacks can quickly degrade power unless the scheme adds explicit defenses \citep{jovanovic2024watermarkstealing}. Unfortunately, these defense mechanisms can also worsen the learnability problem \citep{gu2024learnability}.

\subsection{Diffusion language models}

Diffusion language models (DLMs) generate text by denoising over partially corrupted sequences, updating many token positions at each time step. Much of the early work adapted diffusion to language by mapping tokens into a continuous latent space, though discrete variants that perform diffusion directly in the token space are becoming more prevalent. The forward process corrupts sequences via masking or resampling, and the model learns to reverse that process over categorical states \citep{Hoogeboom2021Multinomial,austin2021d3pm,nie2025llada}. While not as mature as their autoregressive counterparts, DLMs have steadily narrowed the empirical gap to strong autoregressive baselines and may offer faster inference \citep{wang2025d2f}, greater sampling control \citep{schiff2025path}, and global coherence \citep{ma2026diffusionindiffusion}.

Watermarking for DLMs is only beginning to be explored. Both distortion-free methods and the red-green watermark have been studied for masked DLMs \citep{gloaguen2025dlmwatermark, bagchi2025ddlm}. The red-green scheme of Gloaguen et al.\ maximizes the expected frequency of green list tokens over the marginals. They also introduce two interesting diffusion-native extensions: a predictive bias term which makes future tokens more likely to be green, and entropy-based remasking that updates high-uncertainty positions first.

\section{Methodology}
\label{sec:method}

Our watermark is a sequence-level control rule for masked DLMs. The main idea is to represent the text by a sketch function $h(y) \in \mathbb{R}^D$ and guide generation to align $h$ with a target direction $u$. Under a key $k$, we detect with a regularized dot-product score
\begin{equation}
S_{k}(y)
:=
2 \langle u_k,h_k(y)\rangle
- \lambda \lVert h_k(y)\rVert_2^2,
\label{eq:penalized-score}
\end{equation}
where $u_k\in\{\pm1\}^D$ and $\lambda>0$ penalizes the sketch norm. Thus the score rewards agreement with $u_k$ but discourages the sampler from increasing the sketch norm indiscriminately. In fact, completing the square gives
\begin{equation}
S_{k}(y)
=
-\lambda
\left\|
h_k(y)-\frac{1}{\lambda}u_k
\right\|_2^2
+
\frac{1}{\lambda}\|u_k\|_2^2,
\label{eq:penalized-template-equivalence}
\end{equation}
which is maximized when $h_k(y)$ is close, in Euclidean distance, to the scaled target $\frac{u_k}{\lambda}.$ We discuss the choice of $\lambda$ below. 

To embed the watermark, during denoising the DLM marginals are biased along the direction of the detector gradient at the current expected sketch. This is possible without requiring the joint distribution due to the token-additivity of the sketch, which allows us to compute its expectation under the conditional marginals.

\subsection{Sketch design}
\label{sec:method:sketch}

We use a Count-Sketch style function from Definition~\ref{def:count_sketch} for $h_k$. The secret key $k$ is a pair of independent values $k=(k_{\mathrm{sk}},k_{\mathrm{dir}})$, with the hash and sign functions derived from a sketch key $k_{\mathrm{sk}}$, and the Rademacher direction $u_k$ derived from $k_{\text{dir}}$. This construction separates the coordinate system induced by the sketch from the direction tested by the detector, both of which are private. Write the token feature map as
\[
\phi_k:\mathcal V\to\mathbb R^{d\times w},
\quad
h_k(y):=\mathrm{vec}\left(\frac{1}{\sqrt{N}}\sum_{j=1}^N \phi_k(y_j)\right),
\]
such that the sketch dimension is $D = dw$. Each token activates one bucket per row. The signs are chosen such that token density under the base law $\pi_{\text{tok}}$ is approximately balanced for every bucket, that is
\begin{equation}
\sum_{v:\,h_r(v)=b}\pi_{\text{tok}}(v)\,\sigma_r(v)\approx 0,
\label{eq:balanced-signs-method}
\end{equation}
which suppresses the most obvious first-order token footprint. Under this condition, each coordinate of \(h_k(Y)\) has approximately zero mean and standard deviation of order \(w^{-1/2}\). We set the penalty scalar $\lambda=\gamma\sqrt w$ for a tunable constant $\gamma > 0$ such that the score feedback term \(\lambda \widehat h_{k,t}\) has the same scale as \(u_k\). Equivalently, the induced sketch target \(u_k/\lambda\) lies on the normalized scale of attainable sketches:
\[
\left\|u_k/\lambda\right\|_2 \asymp \sqrt d
\asymp \left\|h_k(Y)\right\|_2 .
\]
Here, \(\gamma\) controls the strength of the scale correction. As \(\gamma\) decreases, the rule approaches a fixed red-green watermark similar to \cite{zhao2023provable}; as \(\gamma\) increases, the residual-gradient direction becomes more strongly determined by the current denoising state.

\subsection{Score-gradient embedding}
\label{sec:method:embed}

Let $S_t$ be the masked DLM state at reverse time $t$. Define
\[
F_t:=\{i:(S_t)_i\ne{\tt[MASK]}\},\quad
M_t:=\{i:(S_t)_i={\tt[MASK]}\}.
\]
For each $i\in M_t$, the DLM exposes a marginal distribution $p_i^{(t)}(\cdot\mid S_t,c)\in\Delta(\mathcal V).$ Since the sketch is additive, its conditional expectation under the current masked state is tractable:
\begin{equation}
\begin{gathered}
\widehat H_{k,t}
:=
\sum_{i\in F_t}\phi_k((S_t)_i)
+
\sum_{i\in M_t}
\sum_{v\in\mathcal V}
p_i^{(t)}(v\mid S_t,c)\,\phi_k(v),
\\
\widehat h_{k,t}
:=
\operatorname{vec}\left(\frac{1}{\sqrt N}\widehat H_{k,t}\right).
\end{gathered}
\label{eq:expected-sketch-method}
\end{equation}
Here, $\widehat h_{k,t}$ acts as a one-step surrogate for the final sketch. We aim to nudge it towards $u_k$ with each step. The gradient of the detector score with respect to a generic $h\in\mathbb R^D$ is
\[
\nabla_h S_{k}(h)=2(u_k-\lambda h).
\]
Let the residual-gradient direction be $r_{k,t}:=u_k-\lambda\widehat h_{k,t}.$ For a candidate token $v$, the dot product
\begin{equation}
a_{k,t}(v)
:=
\frac{1}{\sqrt N}\left\langle r_{k,t},
\operatorname{vec}(\phi_k(v))
\right\rangle
\label{eq:local-gradient-score}
\end{equation}
specifies the strength and direction of the watermark bias for token $v$. Since \(r_{k,t}\) depends on the current denoising state through \(\widehat h_{k,t}\), the preference is state-dependent: a token preferred at one step may be unfavorable on another. We implement this bias in late denoising steps by exponentially tilting each masked marginal with strength \(\eta_t>0\):
\begin{equation}
\widetilde{p}_{i}^{(t)}(v)
:=
\frac{
p_i^{(t)}(v\mid S_t,c)
\exp\{\eta_t a_{k,t}(v)\}
}{
\sum_{z\in\mathcal V}
p_i^{(t)}(z\mid S_t,c)
\exp\{\eta_t a_{k,t}(z)\}
}.
\label{eq:local-exponential-tilt}
\end{equation}
The bias slope need not, in principle, be shared across token positions. One might expect to allocate a larger $\eta_t$ to positions for which the local score has higher variance under the base marginal, since such positions permit a larger change in expected score per unit KL cost. However, Proposition ~\ref{prop:minimum-kl-local-gain} shows that a position-uniform choice of \(\eta_t\) is locally KL-optimal.

\begin{algorithm}[t]
\small
\caption{Embedding in a masked DLM}
\label{alg:embed-method}
\begin{algorithmic}[1]
\STATE \textbf{Input:} prompt $c$, key $k$, sketch map $\phi_k$, direction $u_k$, penalty $\lambda$, schedule $\{\eta_t\}$
\STATE Sample $S_T\sim \pi_T(\cdot\mid c)$
\FOR{$t=T,T-1,\dots,1$}
  \STATE Obtain DLM marginals $p_i^{(t)}(\cdot\mid S_t,c)$ for $i\in M_t$
  \IF{$\eta_t>0$}
    \STATE Compute $\widehat h_{k,t}$ from~\eqref{eq:expected-sketch-method}
    \STATE $r_{k,t}\leftarrow u_k-\lambda\widehat h_{k,t}$
    \FOR{each $i\in M_t$}
      \STATE Replace $p_i^{(t)}$ with $\widetilde p_i^{(t)}$ from~\eqref{eq:local-exponential-tilt}
    \ENDFOR
  \ENDIF
  \STATE Apply the DLM update and base remasking rule using the current marginals
\ENDFOR
\STATE \textbf{return} $Y=S_0$
\end{algorithmic}
\end{algorithm}

\subsection{Detection}
\label{sec:method:detect}

Given a candidate text $y=(y_1,\dots,y_N)$ and key $k$, the detector computes $S_{k}(y)$ and applies a threshold:
\begin{equation}
\mathsf{Detect}_k(y)
:=
\mathbf 1\{S_{k}(y)\ge \tau_N\}.
\label{eq:detect-rule}
\end{equation}
The threshold $\tau_N$ is calibrated under the base model, conditional on the output length, to control the false positive rate. The bound in Theorem~\ref{thm:key-randomized-soundness} gives a sufficient analytic threshold; in practice, for each length $N$ or length bin, we may draw unwatermarked samples $Y^{(1)},\ldots,Y^{(B)}\sim \mathcal M^N(c)$ over the same prompt distribution used at deployment and set $\tau_N$ to the empirical $(1-\alpha)$-quantile of $S_k(Y^{(b)})$. This absorbs average prompt effects and residual imbalances in the sketch coordinates.

\begin{algorithm}[t]
\small
\caption{Detection}
\label{alg:detect-method}
\begin{algorithmic}[1]
\STATE \textbf{Input:} text $y$, key $k$, sketch $h_k$, direction $u_k$, penalty $\lambda$, threshold $\tau_N$
\IF{$S_{k}(y)\ge \tau_N$}
  \STATE \textbf{return} watermarked
\ELSE
  \STATE \textbf{return} not watermarked
\ENDIF
\end{algorithmic}
\end{algorithm}

\section{Theoretical Results}
\label{sec:theory}

We prove three basic properties of the watermark. First, KL distortion decomposes along the denoising path. Second, the exponential tilt in Equation \eqref{eq:local-exponential-tilt} has the usual Fisher form: under a small-bias approximation, a common \(\eta_t\) is locally optimal under first, and high-variance positions automatically carry more signal. Third, the detector has a key-randomized soundness bound and degrades linearly under token edits.

\subsection{KL chain rule and local Fisher allocation}
\label{sec:theory:kl-fisher}

The first result is a path-space accounting identity. It is a standard result for Markov chains, specialized to the diffusion setting, and is independent of the particular sketch.

\begin{theorem}
\label{thm:kl_chain_rule}
Let \(P\) and \(\widetilde P\) be the base and watermarked laws on trajectories \((S_T,\ldots,S_0)\), with common initial law \(\pi_T(\cdot\mid c)\) and reverse kernels
\(Q_t(\cdot\mid S_t,c)\) and \(\widetilde Q_t(\cdot\mid S_t,c)\). Assume
\(\widetilde Q_t(\cdot\mid s,c)\ll Q_t(\cdot\mid s,c)\) for \(\widetilde P\)-a.e. \(s\). Then
\begin{equation}
D_{\mathrm{KL}}(\widetilde P\|P)
=
\sum_{t=1}^{T}
\mathbb E_{\widetilde P}\!\left[
D_{\mathrm{KL}}\!\left(
\widetilde Q_t(\cdot\mid S_t,c)
\,\middle\|\,
Q_t(\cdot\mid S_t,c)
\right)
\right].
\label{eq:kl-chain-main}
\end{equation}
If \(P_Y\) and \(\widetilde P_Y\) denote the induced laws of the final text \(Y=S_0\), then
\begin{equation}
D_{\mathrm{KL}}(\widetilde P_Y\|P_Y)
\le
D_{\mathrm{KL}}(\widetilde P\|P).
\label{eq:kl-data-processing-main}
\end{equation}
\end{theorem}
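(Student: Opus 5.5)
The plan is to write both path laws as products of their initial law and reverse kernels, take the log-likelihood ratio, and observe that it telescopes into a sum of per-step log-ratios. The state space $(\mathcal V\cup\{\texttt{[MASK]}\})^N$ is finite, so every law involved has a probability mass function and no measure-theoretic care beyond absolute continuity is needed. By the Markov structure of Definition~\ref{def:dlm}, for a trajectory $s=(s_T,\dots,s_0)$ we have
\[
P(s)=\pi_T(s_T\mid c)\prod_{t=1}^T Q_t(s_{t-1}\mid s_t,c),
\qquad
\widetilde P(s)=\pi_T(s_T\mid c)\prod_{t=1}^T \widetilde Q_t(s_{t-1}\mid s_t,c).
\]
Here $\widetilde Q_t$ is the composition of the tilted sampling of the masked positions with the base remasking rule.

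First, I will check that $\widetilde P\ll P$. If $\widetilde P(s)>0$, then $\pi_T(s_T\mid c)>0$ and every factor $\widetilde Q_t(s_{t-1}\mid s_t,c)>0$. Each $s_t$ is then reached with positive $\widetilde P$-probability, so the hypothesis $\widetilde Q_t(\cdot\mid s_t,c)\ll Q_t(\cdot\mid s_t,c)$ applies and gives $Q_t(s_{t-1}\mid s_t,c)>0$. Hence $P(s)>0$. On the support of $\widetilde P$ the common initial factor cancels, and
\[
\log\frac{\widetilde P(s)}{P(s)}=\sum_{t=1}^T\log\frac{\widetilde Q_t(s_{t-1}\mid s_t,c)}{Q_t(s_{t-1}\mid s_t,c)}.
\]

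Next, I take the $\widetilde P$-expectation term by term, which is legitimate because every sum is finite. For each $t$, I condition on $S_t$ using the tower property. Under $\widetilde P$, the conditional law of $S_{t-1}$ given the whole future $(S_T,\dots,S_t)$ is $\widetilde Q_t(\cdot\mid S_t,c)$, so
\[
\mathbb E_{\widetilde P}\!\left[\log\frac{\widetilde Q_t(S_{t-1}\mid S_t,c)}{Q_t(S_{t-1}\mid S_t,c)}\,\middle|\,S_t\right]
= D_{\mathrm{KL}}\!\left(\widetilde Q_t(\cdot\mid S_t,c)\,\middle\|\,Q_t(\cdot\mid S_t,c)\right).
\]
Summing over $t$ gives \eqref{eq:kl-chain-main}.

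The main point to handle carefully is that the watermarked kernel at step $t$ may depend on more than $S_t$. Through $\widehat h_{k,t}$ it depends on the marginals, which are themselves functions of $(S_t,c)$ and the fixed key, so $\widetilde Q_t$ is still a kernel in $S_t$ alone. I will state explicitly that the key $k$ is held fixed, so the result holds conditionally on $k$, and that any extra sampler randomness, if present, is absorbed into the state. With those conventions $\widetilde P$ is a genuine Markov chain and the identity follows.

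For \eqref{eq:kl-data-processing-main}, I apply the data-processing inequality to the deterministic map $s\mapsto s_0$. Concretely, I use the chain rule for KL on the pair $(S_0,\text{rest of trajectory})$:
\[
D_{\mathrm{KL}}(\widetilde P\|P)=D_{\mathrm{KL}}(\widetilde P_Y\|P_Y)+\mathbb E_{\widetilde P_Y}\!\left[D_{\mathrm{KL}}\bigl(\widetilde P(\cdot\mid Y)\,\big\|\,P(\cdot\mid Y)\bigr)\right].
\]
The second term is nonnegative, which gives the inequality. Equivalently, one can apply the log-sum inequality to group trajectories by their final state.
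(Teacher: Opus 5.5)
Your proposal is correct and follows essentially the same route as the paper. You cancel the common initial law in the path likelihood ratio, condition each per-step log-ratio on $S_t$ to obtain the transition KLs, and get \eqref{eq:kl-data-processing-main} from data processing under the map to $S_0$. Your explicit verification that $\widetilde P\ll P$, and your derivation of the data-processing step from the KL chain rule rather than citing monotonicity, are useful details the paper leaves implicit.
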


\begin{proof}
The common initial law cancels in the trajectory likelihood ratio:
\[
\log\frac{d\widetilde P}{dP}(S_{0:T})
=
\sum_{t=1}^T
\log
\frac{d\widetilde Q_t(\cdot\mid S_t,c)}
     {dQ_t(\cdot\mid S_t,c)}(S_{t-1}).
\]
Taking expectation under \(\widetilde P\) and conditioning on \(S_t\) gives
\eqref{eq:kl-chain-main}. Since \(Y=S_0\) is a measurable function of the trajectory,
\eqref{eq:kl-data-processing-main} follows from the monotonicity of KL divergence under measurable transformations
\citep{cover2006elements}.
\end{proof}

If the remasking rule is identical to the unwatermarked sampler—which is not guaranteed in our scheme—and masked positions are sampled conditionally independently (i.e. the "factorization assumption") then
\begin{equation}
D_{\mathrm{KL}}\!\left(
\widetilde Q_t(\cdot\mid s_t,c)
\,\middle\|\,
Q_t(\cdot\mid s_t,c)
\right)
\le
\sum_{i\in M_t}
D_{\mathrm{KL}}\!\left(
\widetilde p_i^{(t)}(\cdot\mid s_t,c)
\,\middle\|\,
p_i^{(t)}(\cdot\mid s_t,c)
\right).
\label{eq:transition-kl-position-bound}
\end{equation}
Equality holds for tokens drawn before remasking.

We next show that the bias allocation rule is approximately KL-optimal per time step. For a step \(t\) and \(i\in M_t\), write \(p_i^{(t)}(\cdot)=p_i^{(t)}(\cdot\mid s_t,c)\), and define
\[
\mu_{t,i}:=\mathbb E_{p_i^{(t)}}[a_{k,t}(v)],
\qquad
A_{t,i}(\eta)
:=
\log
\mathbb E_{p_i^{(t)}}[\exp\{\eta a_{k,t}(v)\}].
\]
Let \(\eta\ge 0\) be a common slope, and
\begin{equation}
p_{i,\eta}^{(t)}(v)
:=
\frac{
p_i^{(t)}(v)\exp\{\eta a_{k,t}(v)\}
}{
\sum_{z\in\mathcal V}
p_i^{(t)}(z)\exp\{\eta a_{k,t}(z)\}
}.
\label{eq:common-tilt-theory}
\end{equation}
\begin{proposition}[Minimum-KL score gain]
\label{prop:minimum-kl-local-gain}
Suppose we want to increase the linearized expected score at step $t$ by at least \(\Gamma_t\) while expending minimal KL distortion. Accordingly, the solution of
\begin{equation}
\min_{\{q_i\}}
\sum_{i\in M_t}
D_{\mathrm{KL}}\!\left(q_i\,\middle\|\,p_i^{(t)}\right)
\quad
\text{subject to}
\quad
\sum_{i\in M_t}
\left(
\mathbb E_{q_i}[a_{k,t}(V)]-\mu_{t,i}
\right)
\ge \Gamma_t
\label{eq:constrained-kl-projection}
\end{equation}
is $q_i^\star=p_{i,\eta}^{(t)} \text{ }$ for all $i\in M_t$, where \(\eta\ge0\) is chosen so that
\begin{equation}
\sum_{i\in M_t}
\left(
\mathbb E_{p_{i,\eta}^{(t)}}[a_{k,t}(V)]-\mu_{t,i}
\right)
=
\Gamma_t,
\label{eq:eta-gain-equation}
\end{equation}
whenever the constraint is feasible and active. 
\end{proposition}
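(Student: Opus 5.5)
The plan is to treat \eqref{eq:constrained-kl-projection} as a convex program and read the solution off its KKT conditions. The objective $\sum_i D_{\mathrm{KL}}(q_i\|p_i^{(t)})$ is strictly convex in $(q_i)_{i\in M_t}$ on the product of simplices, restricted to $q_i\ll p_i^{(t)}$. The constraint is linear in the $q_i$. Feasibility plus a Slater-type condition (some feasible point with strict inequality, or simply the linearity of all constraints) therefore gives strong duality. Strict convexity then makes the minimizer unique, so it suffices to exhibit one KKT point.

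The first step is to form the Lagrangian with a single multiplier $\eta\ge 0$ for the gain constraint and multipliers $\nu_i$ for the normalizations $\sum_v q_i(v)=1$:
\[
\mathcal L=\sum_{i\in M_t}\sum_{v}q_i(v)\log\frac{q_i(v)}{p_i^{(t)}(v)}-\eta\Big(\sum_{i\in M_t}\big(\mathbb E_{q_i}[a_{k,t}(V)]-\mu_{t,i}\big)-\Gamma_t\Big)+\sum_{i}\nu_i\Big(\sum_v q_i(v)-1\Big).
\]
Stationarity in $q_i(v)$ gives $\log(q_i(v)/p_i^{(t)}(v))+1-\eta a_{k,t}(v)+\nu_i=0$. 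Hence $q_i(v)\propto p_i^{(t)}(v)\exp\{\eta a_{k,t}(v)\}$, and normalizing gives exactly $p^{(t)}_{i,\eta}$ from \eqref{eq:common-tilt-theory}.

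The key structural point is that the multiplier $\eta$ is shared across all $i$, because there is only one coupling constraint. This is precisely why a position-uniform slope emerges. Nonnegativity constraints $q_i(v)\ge0$ are inactive on the support of $p_i^{(t)}$, since the exponential form is strictly positive there. Off the support, $q_i$ must vanish anyway for finite KL.

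The remaining step is complementary slackness. If the constraint is active, with $\Gamma_t>0$ and the unconstrained minimizer $q_i=p_i^{(t)}$ infeasible, then $\eta$ is determined by \eqref{eq:eta-gain-equation}. Existence and uniqueness of such $\eta$ follow from the map
\[
g(\eta):=\sum_i\big(A_{t,i}'(\eta)-\mu_{t,i}\big).
\]
This map is continuous with $g(0)=0$ and $g'(\eta)=\sum_i A''_{t,i}(\eta)=\sum_i\mathrm{Var}_{p_{i,\eta}^{(t)}}(a_{k,t})\ge0$, strictly positive unless every $a_{k,t}$ is $p_i^{(t)}$-a.s.\ constant. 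So $g$ is nondecreasing, and intermediate values are attained for $\Gamma_t$ below $\sup_\eta g(\eta)=\sum_i(\max_{\mathrm{supp}\,p_i}a_{k,t}-\mu_{t,i})$. That supremum is exactly the feasibility limit.

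The main obstacle I anticipate is rigor at the boundary rather than the calculus. I need to justify the KKT or duality argument when the supremum is approached but not attained, and handle the degenerate case where $g$ is flat. To keep this clean, I would first try a direct verification that bypasses duality. For any feasible $\{q_i\}$, the Gibbs/Donsker–Varadhan identity gives
\[
D_{\mathrm{KL}}(q_i\|p_i^{(t)})=D_{\mathrm{KL}}(q_i\|p_{i,\eta}^{(t)})+\eta\,\mathbb E_{q_i}[a_{k,t}]-A_{t,i}(\eta).
\]
Summing over $i$ and using the constraint with $\eta\ge0$ gives
\[
\sum_i D_{\mathrm{KL}}(q_i\|p_i^{(t)})\ge \eta\Big(\Gamma_t+\sum_i\mu_{t,i}\Big)-\sum_iA_{t,i}(\eta).
\]
This lower bound is attained with equality exactly when $q_i=p_{i,\eta}^{(t)}$ and \eqref{eq:eta-gain-equation} holds. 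That yields optimality and uniqueness in one line.
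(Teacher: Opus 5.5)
Your proposal is correct and follows the paper's argument: a single multiplier $\eta\ge 0$ for the one gain constraint, separation over positions, and the exponential tilt $p_{i,\eta}^{(t)}$ as the per-position minimizer, with $\eta$ shared because there is only one coupling constraint. Your identity $D_{\mathrm{KL}}(q_i\|p_i^{(t)})=D_{\mathrm{KL}}(q_i\|p_{i,\eta}^{(t)})+\eta\,\mathbb E_{q_i}[a_{k,t}]-A_{t,i}(\eta)$ is the Gibbs variational formula the paper cites, written out as a Lagrangian-sufficiency certificate; this makes the Slater/strong-duality discussion unnecessary and gives uniqueness immediately, while your monotonicity argument for $g$ supplies the existence of $\eta$ that the paper assumes and shows the statement tacitly needs $\Gamma_t$ strictly below the feasibility limit $\sum_i(\max_{\mathrm{supp}\,p_i^{(t)}}a_{k,t}-\mu_{t,i})$ whenever that limit is positive.
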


\begin{proof}
The Lagrangian for \eqref{eq:constrained-kl-projection} with respect to $\eta$ is
\[
\sum_i D_{\mathrm{KL}}(q_i\|p_i^{(t)})
-
\eta
\left[
\sum_i
\left(
\mathbb E_{q_i}[a_{k,t}(V)]-\mu_{t,i}
\right)
-\Gamma_t
\right].
\]
Constants independent of \(q_i\) may be dropped, so the problem separates over positions. For each \(i\), we need to minimize
\[
D_{\mathrm{KL}}(q_i\|p_i^{(t)})
-
\eta \mathbb E_{q_i}[a_{k,t}(V)].
\]
By the Gibbs variational formula, the unique minimizer is
\[
q_i^\star(v)
=
\frac{
p_i^{(t)}(v)\exp\{\eta a_{k,t}(v)\}
}{
\sum_{z\in\mathcal V}
p_i^{(t)}(z)\exp\{\eta a_{k,t}(z)\}
}.
\]
The multiplier \(\eta\) is common because there is a single total gain constraint. If the constraint is active, \(\eta\) is chosen to satisfy
\eqref{eq:eta-gain-equation}.
\end{proof}

\begin{corollary}[Local Fisher variance]
\label{cor:local-fisher-law}
Assume there exist constants $B_t$ such that \(|a_{k,t}(v)|\le B_t\) for all \(v\in\mathcal V\), and define
\[
\sigma_{t,i}^2
:=
\operatorname{Var}_{p_i^{(t)}}(a_{k,t}(v)).
\]
Then, as \(\eta\to0\),
\begin{align}
\mathbb E_{p_{i,\eta}^{(t)}}[a_{k,t}(v)]-\mu_{t,i}
&=
\eta\sigma_{t,i}^2
+
O(B_t^3|\eta|^2),
\label{eq:local-mean-expansion-theory}
\\
D_{\mathrm{KL}}\!\left(
p_{i,\eta}^{(t)}
\,\middle\|\,
p_i^{(t)}
\right)
&=
\frac12\eta^2\sigma_{t,i}^2
+
O(B_t^3|\eta|^3).
\label{eq:local-kl-expansion-theory}
\end{align}
The constants in the \(O(\cdot)\) terms are universal.
\end{corollary}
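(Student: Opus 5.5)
Fix \(t\) and \(i\in M_t\), write \(p=p_i^{(t)}\), \(a=a_{k,t}\) and \(A(\eta)=A_{t,i}(\eta)\), and work with the cumulant generating function of \(a(V)\) under \(p\). Because \(\mathcal V\) is finite and \(|a|\le B_t\), \(A\) is real-analytic in \(\eta\) with \(A(0)=0\), \(A'(0)=\mu_{t,i}\) and \(A''(0)=\sigma_{t,i}^2\). The tilted law \(p_{i,\eta}^{(t)}\) from \eqref{eq:common-tilt-theory} is the exponential family generated by \(a\), which gives two standard identities:
\[
\mathbb E_{p_{i,\eta}^{(t)}}[a(V)]=A'(\eta),\qquad D_{\mathrm{KL}}\!\left(p_{i,\eta}^{(t)}\,\middle\|\,p\right)=\eta A'(\eta)-A(\eta).
\]
The second follows from \(\log(p_{i,\eta}^{(t)}/p)=\eta a-A(\eta)\) by taking the expectation under \(p_{i,\eta}^{(t)}\). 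Both expansions then reduce to Taylor expansions of \(A\) about \(0\), with remainders controlled by \(A'''\).

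The key estimate is a uniform bound on the third derivative. For every \(\eta\), \(A'''(\eta)\) is the third central moment of \(a(V)\) under \(p_{i,\eta}^{(t)}\). Since \(a\) takes values in \([-B_t,B_t]\), the centered variable satisfies \(|a-A'(\eta)|\le 2B_t\), and so \(|A'''(\eta)|\le 8B_t^3\) for all \(\eta\), independently of \(p\). By the same argument, \(|A''(\eta)|\le 4B_t^2\) everywhere, which is what makes the constants universal.

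For \eqref{eq:local-mean-expansion-theory}, apply Taylor's theorem with Lagrange remainder to \(A'\):
\[
A'(\eta)-\mu_{t,i}=\eta\sigma_{t,i}^2+\tfrac12 A'''(\xi)\eta^2 ,
\]
and the remainder is at most \(4B_t^3\eta^2\) in absolute value. For \eqref{eq:local-kl-expansion-theory}, set \(g(\eta)=\eta A'(\eta)-A(\eta)\). Then \(g(0)=0\), \(g'(\eta)=\eta A''(\eta)\), \(g'(0)=0\), \(g''(0)=A''(0)=\sigma_{t,i}^2\), and
\[
g'''(\eta)=2A'''(\eta)+\eta A''''(\eta).
\]
The \(\eta A''''\) term would need a separate fourth-cumulant bound of order \(B_t^4|\eta|\). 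To avoid it, use the integral form instead: \(g(\eta)=\int_0^\eta sA''(s)\,ds\), and \(A''(s)=\sigma_{t,i}^2+O(B_t^3|s|)\) by the third-derivative bound. Integrating gives
\[
g(\eta)=\tfrac12\eta^2\sigma_{t,i}^2+R,\qquad |R|\le \tfrac{8}{3}B_t^3|\eta|^3 .
\]

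The only delicate point is this remainder bookkeeping for the KL term, specifically choosing the integral representation so that only \(A'''\) appears. Everything else is routine exponential-family calculus.
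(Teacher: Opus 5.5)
Your proposal is correct and follows the same route as the paper: write the tilted mean and the KL divergence through the cumulant generating function as \(A'(\eta)\) and \(\eta A'(\eta)-A(\eta)\), then Taylor-expand about \(0\) using a \(B_t^3\) bound on \(A'''\). Your global bound \(|A'''|\le 8B_t^3\) and the integral form \(\eta A'(\eta)-A(\eta)=\int_0^\eta sA''(s)\,ds\) make explicit the remainder bookkeeping that the paper compresses into ``Taylor expansion gives both claims,'' and they give concrete constants valid for all \(\eta\).
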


\begin{proof}
Let \(A_{t,i}(\eta)=\log\mathbb E_{p_i^{(t)}}[\exp\{\eta a_{k,t}(v)\}]\), thus
\[
\mathbb E_{p_{i,\eta}^{(t)}}[a_{k,t}(v)]
=
A_{t,i}'(\eta),
\qquad
D_{\mathrm{KL}}\!\left(p_{i,\eta}^{(t)}\|p_i^{(t)}\right)
=
\eta A_{t,i}'(\eta)-A_{t,i}(\eta).
\]
Moreover \(A_{t,i}'(0)=\mu_{t,i}\) and \(A_{t,i}''(0)=\sigma_{t,i}^2\). Since \(a_t\) is bounded by \(B_t\), the third derivative of \(A_{t,i}\) is bounded in a neighborhood of zero by a universal constant times \(B_t^3\). Taylor expansion gives both claims.
\end{proof}

The projection in Proposition~\ref{prop:minimum-kl-local-gain} allocates expected score gain per position by
\[
\Delta_{t,i}(\eta)
:=
\mathbb E_{p_{i,\eta}^{(t)}}[a_{k,t}(v)]-\mu_{t,i}
=
A_{t,i}'(\eta)-A_{t,i}'(0).
\]
Thus the slope \(\eta\) is common across positions, while the expected gain varies through the local distribution of \(a_{k,t}(v)\). In a small bias regime, Corollary~\ref{cor:local-fisher-law} gives
\[
\Delta_{t,i}(\eta)
=
\eta\sigma_{t,i}^2+O(B_t^3\eta^2),
\]
so positions with larger local Fisher variance receive more expected detector gain.

\subsection{Soundness}
\label{sec:theory:soundness}

The following is a key-randomized soundness result on the detector in Equation \eqref{eq:detect-rule}. For any unwatermarked text \(y\), we bound the probability that the detector flags \(y\) when the secret watermark key is drawn at random.

\begin{theorem}
\label{thm:key-randomized-soundness}
Suppose $y$ is independent of the watermark key $k$. For any \(\tau>0\), if \(h_k(y)\ne0\), then
\begin{equation}
\mathbb P_{u_k}\{S_k(y)\ge \tau\mid h_k(y)\}
\le
\exp\!\left(
-
\frac{(\tau+\lambda\|h_k(y)\|_2^2)^2}
{8\|h_k(y)\|_2^2}
\right).
\label{eq:soundness-conditional}
\end{equation}
Uniformly over \(h_k\),
\begin{equation}
\mathbb P_{u_k}\{S_k(y)\ge \tau\mid h_k(y)\}
\le
\exp\!\left(-\frac{\lambda\tau}{2}\right).
\label{eq:soundness-uniform}
\end{equation}
The threshold
\begin{equation}
\tau_\alpha
=
\frac{2}{\lambda}\log\frac{1}{\alpha}
\label{eq:soundness-threshold}
\end{equation}
controls the key-randomized false-positive probability at level \(\alpha\).
\end{theorem}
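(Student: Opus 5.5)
The plan is to condition on the sketch and exploit only the randomness of the Rademacher direction \(u_k\). Because the key \(k=(k_{\mathrm{sk}},k_{\mathrm{dir}})\) has independent components and \(y\) is independent of \(k\), the vector \(h:=h_k(y)\) depends only on \((y,k_{\mathrm{sk}})\). Hence, conditionally on \(h\), the direction \(u_k\) remains a vector of i.i.d.\ uniform signs.

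Checking this independence carefully is the one step I regard as a genuine (if mild) obstacle. The bound concerns randomness over \(u_k\) alone, and the argument would fail if \(y\) could depend on \(k_{\mathrm{dir}}\), for example if \(y\) were produced by the watermarked sampler.

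Fix \(h\ne 0\). The event \(\{S_k(y)\ge\tau\}\) is exactly
\[
\{\langle u_k,h\rangle \ge s\},\qquad s:=\tfrac12\bigl(\tau+\lambda\|h\|_2^2\bigr)>0 .
\]
The inner product \(\langle u_k,h\rangle=\sum_{j=1}^D (u_k)_j h_j\) is a sum of independent, mean-zero terms, each lying in \([-|h_j|,|h_j|]\). Hoeffding's inequality (equivalently, the sub-Gaussian bound \(\mathbb E e^{\theta\langle u,h\rangle}\le e^{\theta^2\|h\|_2^2/2}\) followed by a Chernoff step) gives
\[
\mathbb P\{\langle u_k,h\rangle\ge s\}\le \exp\!\left(-\frac{s^2}{2\|h\|_2^2}\right).
\]
Substituting \(s\) yields exactly \eqref{eq:soundness-conditional}, since \(s^2/(2\|h\|_2^2)=(\tau+\lambda\|h\|_2^2)^2/(8\|h\|_2^2)\).

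For the uniform bound \eqref{eq:soundness-uniform}, write \(x=\|h\|_2^2>0\) and apply AM--GM in the form \((\tau+\lambda x)^2\ge 4\lambda\tau x\). This gives
\[
\frac{(\tau+\lambda x)^2}{8x}\ge\frac{\lambda\tau}{2},
\]
with the bound independent of \(x\). The degenerate case \(h=0\) must be handled separately. There \(S_k(y)=0<\tau\), so the conditional probability is \(0\) and the bound holds trivially.

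Finally, since the bound is uniform in \(h\), averaging over \(k_{\mathrm{sk}}\) by the tower property shows the unconditional key-randomized false-positive probability is also at most \(e^{-\lambda\tau/2}\). Setting \(e^{-\lambda\tau/2}=\alpha\) gives \(\tau_\alpha=\frac{2}{\lambda}\log\frac1\alpha\), which is \eqref{eq:soundness-threshold}.
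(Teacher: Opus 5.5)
Your proof is correct and follows the paper's argument essentially step for step: you condition on the sketch, apply Hoeffding to the Rademacher sum at level $(\tau+\lambda\|h\|_2^2)/2$, use $(\tau+\lambda z)^2\ge 4\lambda\tau z$ for the uniform bound, and treat $h=0$ separately. Your explicit justification that $u_k$ stays uniform given $h_k(y)$, via the split key $(k_{\mathrm{sk}},k_{\mathrm{dir}})$, and your tower-property step to remove the conditioning are slightly more careful than the paper's one-line appeal to independence.
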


\begin{proof}
Conditioning on \(h_k\), the event \(S_k(y)\ge\tau\) is
\[
\langle u_k,h_k(y)\rangle
\ge
\frac{\tau+\lambda\|h_k(y)\|_2^2}{2}.
\]
Recall that $u_k$ and $h_k$ are independent—thus the left-hand side is a centered Rademacher sum. Hoeffding's inequality gives
\[
\mathbb P\{\langle u_k,h\rangle\ge a\mid h\}
\le
\exp\!\left(-\frac{a^2}{2\|h\|_2^2}\right).
\]
Taking \(a=(\tau+\lambda\|h_k(y)\|_2^2)/2\) gives
\eqref{eq:soundness-conditional}. If \(h_k(y)=0\), the event is impossible for \(\tau>0\). For \(z=\|h_k(y)\|_2^2>0\),
\[
\frac{(\tau+\lambda z)^2}{z}
=
\frac{\tau^2}{z}+2\lambda\tau+\lambda^2z
\ge
4\lambda\tau,
\]
so \eqref{eq:soundness-conditional} implies
\[
\mathbb P_{u_k}\{S_k(y)\ge \tau\mid h_k(y)\}
\le
\exp\!\left(-\frac{4\lambda\tau}{8}\right)
=
\exp\!\left(-\frac{\lambda\tau}{2}\right).
\]
Solving \(\exp(-\lambda\tau/2)\le\alpha\) gives
\eqref{eq:soundness-threshold}.
\end{proof}

The threshold \eqref{eq:soundness-threshold} is rather conservative because it ignores the null distribution of \(h_k(Y)\). Therefore, empirical estimation of the threshold (e.g., \(\tau_N\) from Section~\ref{sec:method:detect}) is preferable. This theorem primarily shows that the norm penalty prevents arbitrarily large sketches from creating false positives.

\subsection{Edit sensitivity of the score}
\label{sec:theory:edit-robustness}

A commonly studied aspect of watermark robustness is how the detection statistic degrades under substitutions, insertions, and deletions. Here, we establish bounds for our score under arbitrary edits.

\begin{assumption}[Bounded token map]
\label{ass:bounded-phi}
There exists \(L_\phi<\infty\) such that for all $v\in\mathcal V$, $
\|\operatorname{vec}(\phi_k(v))\|_2\le L_\phi$.
\end{assumption}
For Count-Sketch, \(L_\phi=\sqrt d\), since each token activates one signed bucket in each of the \(d\) rows.

\begin{lemma}[Sketch continuity]
\label{lem:normalized-sketch-edit-lipschitz}
For any two texts \(y\in\mathcal V^N\) and \(y'\in\mathcal V^{N'}\), let their Levenshtein edit distance be \(E=\operatorname{ed}(y,y')\). If \(N_{\min}:=\min\{N,N'\}\ge1\), then
\begin{equation}
\|h_k(y)-h_k(y')\|_2
\le
\frac{3L_\phi E}{\sqrt{N_{\min}}}.
\label{eq:sketch-edit-bound-length-changing}
\end{equation}
If \(N=N'\), the stronger bound holds:
\begin{equation}
\|h_k(y)-h_k(y')\|_2
\le
\frac{2L_\phi E}{\sqrt N}
\label{eq:sketch-edit-bound-substitution}
\end{equation}
\end{lemma}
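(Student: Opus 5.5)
The plan is to work with the unnormalized sketch \(H(y)=\sum_{j}\phi_k(y_j)\), for which Definition~\ref{def:count_sketch} gives exact additivity, and only afterwards account for the normalization \(h_k(y)=\operatorname{vec}(H(y))/\sqrt{N}\). The first step is a bound on the unnormalized sketch:
\[
\|H(y)-H(y')\|_2\le 2L_\phi E .
\]
To get it, fix a minimal edit script of length \(E\) transforming \(y\) into \(y'\) and track \(H\) along the intermediate strings. By additivity, each operation changes \(H\) by a single token term or a difference of two:
\begin{itemize}
\item an insertion changes \(H\) by \(+\phi_k(v)\);
\item a deletion changes \(H\) by \(-\phi_k(v)\);
\item a substitution \(v\to v'\) changes \(H\) by \(\phi_k(v')-\phi_k(v)\).
\end{itemize}
By Assumption~\ref{ass:bounded-phi}, each step therefore moves \(H\) by at most \(2L_\phi\) in norm, and the triangle inequality along the script gives the displayed bound. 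Note that only the multiset of tokens matters for \(H\), so the edit positions are irrelevant.

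\emph{Equal lengths.} When \(N=N'\), both sketches share the factor \(1/\sqrt N\). Hence
\[
\|h_k(y)-h_k(y')\|_2=\frac{\|H(y)-H(y')\|_2}{\sqrt N}\le \frac{2L_\phi E}{\sqrt N},
\]
which is \eqref{eq:sketch-edit-bound-substitution}.

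\emph{Different lengths.} By symmetry, assume \(N\le N'\), so \(N_{\min}=N\). I would split
\[
h_k(y)-h_k(y')=\frac{H(y)-H(y')}{\sqrt N}+H(y')\Bigl(\frac1{\sqrt N}-\frac1{\sqrt{N'}}\Bigr),
\]
where vectorization is implicit. The first term is at most \(2L_\phi E/\sqrt N\) by the first step.

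For the second term, use the crude bound \(\|H(y')\|_2\le L_\phi N'\), together with
\[
\frac1{\sqrt N}-\frac1{\sqrt{N'}}=\frac{N'-N}{\sqrt N\sqrt{N'}\,(\sqrt N+\sqrt{N'})}
\qquad\text{and}\qquad
|N'-N|\le E .
\]
The length bound holds because each edit changes the length by at most one. Multiplying these, the second term is at most
\[
\frac{L_\phi E\sqrt{N'}}{\sqrt N(\sqrt N+\sqrt{N'})}\le \frac{L_\phi E}{\sqrt N}.
\]
Adding the two terms gives \(3L_\phi E/\sqrt{N_{\min}}\), which is \eqref{eq:sketch-edit-bound-length-changing}.

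The step needing care is this normalization-mismatch term. The crude bound \(\|H(y')\|_2\le L_\phi N'\) grows with the longer length \(N'\), so it must be paired with the factor \((\sqrt N+\sqrt{N'})^{-1}\) from rationalizing the difference of inverse square roots. Only then does the dependence on \(N'\) cancel, so that the bound involves \(N_{\min}\) alone. Choosing to rescale by the shorter length, rather than the longer one, is what makes this cancellation work cleanly.
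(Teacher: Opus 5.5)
Your proposal is correct and follows essentially the same route as the paper: bound the unnormalized difference $\|H(y)-H(y')\|_2\le 2L_\phi E$ along an edit script, then (for $N\le N'$) split off the normalization-mismatch term and control it with $\|H(y')\|_2\le L_\phi N'$ and $N'-N\le E$. Your explicit rationalization of $1/\sqrt N-1/\sqrt{N'}$ fills in the arithmetic that the paper states without detail, showing that the mismatch term is at most $L_\phi E/\sqrt N$.
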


\begin{proof}
Let \(H=H_k(y)\) and \(H'=H_k(y')\). One substitution changes each unnormalized sketch by at most \(2L_\phi\), and each insertion or deletion changes it by at most \(L_\phi\). Thus, $\|H-H'\|_2\le 2L_\phi E$. Assume without loss of generality that \(N\le N'\). Then
\[
\left\|
\frac{H}{\sqrt N}
-
\frac{H'}{\sqrt{N'}}
\right\|_2
\le
\frac{\|H-H'\|_2}{\sqrt N}
+
\left|
\frac{1}{\sqrt N}
-
\frac{1}{\sqrt{N'}}
\right|
\|H'\|_2 .
\]
Since \(\|H'\|_2\le L_\phi N'\) and \(N'-N\le E\), the second term on the right-hand side is at most
\(L_\phi E/\sqrt N\). The first term is at most \(2L_\phi E/\sqrt N\). This gives
\eqref{eq:sketch-edit-bound-length-changing}. If \(N=N'\), the normalization term vanishes, giving
\eqref{eq:sketch-edit-bound-substitution}.
\end{proof}

\begin{theorem}[Worst-case score sensitivity]
\label{thm:penalized-edit-robustness}
Let \(y\in\mathcal V^N\), \(y'\in\mathcal V^{N'}\), \(h=h_k(y)\), and \(h'=h_k(y')\). Then
\begin{equation}
|S_k(y)-S_k(y')|
\le
\left(
2\sqrt D
+
\lambda(\|h\|_2+\|h'\|_2)
\right)
\|h-h'\|_2.
\label{eq:score-lipschitz-generic}
\end{equation}
Consequently, under Assumption~\ref{ass:bounded-phi},
\begin{equation}
|S_k(y)-S_k(y')|
\le
\frac{3L_\phi E}{\sqrt{N_{\min}}}
\left(
2\sqrt D
+
\lambda(\|h\|_2+\|h'\|_2)
\right)
\label{eq:penalized-edit-bound-length-changing}
\end{equation}
for arbitrary edits, and
\begin{equation}
|S_k(y)-S_k(y')|
\le
\frac{2L_\phi E}{\sqrt N}
\left(
2\sqrt D
+
\lambda(\|h\|_2+\|h'\|_2)
\right)
\label{eq:penalized-edit-bound-substitution}
\end{equation}
for \(E\) substitutions at fixed length \(N\).
\end{theorem}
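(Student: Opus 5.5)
The plan is to prove the generic Lipschitz bound \eqref{eq:score-lipschitz-generic} by splitting the score difference into its linear and quadratic parts. We then obtain \eqref{eq:penalized-edit-bound-length-changing} and \eqref{eq:penalized-edit-bound-substitution} by substituting the sketch continuity bounds of Lemma~\ref{lem:normalized-sketch-edit-lipschitz}.

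\textbf{Step 1 (decomposition).} By definition \eqref{eq:penalized-score},
\[
S_k(y)-S_k(y') = 2\langle u_k, h-h'\rangle - \lambda\bigl(\|h\|_2^2-\|h'\|_2^2\bigr).
\]
We bound the two terms separately and combine them with the triangle inequality.

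\textbf{Step 2 (linear term).} Cauchy--Schwarz gives $|\langle u_k,h-h'\rangle|\le \|u_k\|_2\|h-h'\|_2$. Since $u_k\in\{\pm1\}^D$, we have $\|u_k\|_2=\sqrt D$, so the linear term contributes at most $2\sqrt D\,\|h-h'\|_2$.

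\textbf{Step 3 (quadratic term).} Factor the difference of squared norms as
\[
\|h\|_2^2-\|h'\|_2^2=\langle h+h',\,h-h'\rangle .
\]
Applying Cauchy--Schwarz and then the triangle inequality $\|h+h'\|_2\le\|h\|_2+\|h'\|_2$ gives
\[
\lambda\bigl|\|h\|_2^2-\|h'\|_2^2\bigr|\le \lambda(\|h\|_2+\|h'\|_2)\|h-h'\|_2 .
\]
Adding Steps 2 and 3 yields \eqref{eq:score-lipschitz-generic}. This step is the only one requiring a small identity; everything else is mechanical.

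\textbf{Step 4 (edit bounds).} The prefactor $2\sqrt D+\lambda(\|h\|_2+\|h'\|_2)$ is nonnegative, so we may replace $\|h-h'\|_2$ by any upper bound. Under Assumption~\ref{ass:bounded-phi}, Lemma~\ref{lem:normalized-sketch-edit-lipschitz} gives $\|h-h'\|_2\le 3L_\phi E/\sqrt{N_{\min}}$ for arbitrary edits, provided $N_{\min}\ge1$; substituting gives \eqref{eq:penalized-edit-bound-length-changing}. When $N=N'$, the lemma gives $\|h-h'\|_2\le 2L_\phi E/\sqrt N$, and substituting gives \eqref{eq:penalized-edit-bound-substitution}.

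There is no real obstacle. The only points needing care are stating the hypothesis $N_{\min}\ge1$ that the lemma requires, and noting that the bound is stated with the endpoint norms $\|h\|_2,\|h'\|_2$ rather than a worst-case norm. If one wants a bound depending only on $y$, one can further use $\|h'\|_2\le\|h\|_2+\|h-h'\|_2$, but the theorem as stated does not require this.
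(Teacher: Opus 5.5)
Your proposal is correct and follows the paper's proof: you split the score difference into the linear term $2\langle u_k,h-h'\rangle$, bounded by Cauchy--Schwarz with $\|u_k\|_2=\sqrt D$, and the quadratic term, bounded by $\lambda(\|h\|_2+\|h'\|_2)\|h-h'\|_2$, then substitute Lemma~\ref{lem:normalized-sketch-edit-lipschitz}. Your factorization $\|h\|_2^2-\|h'\|_2^2=\langle h+h',h-h'\rangle$ spells out the inequality the paper states without derivation, and your note that the $N_{\min}\ge 1$ hypothesis is inherited from the lemma is a fair point of care.
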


\begin{proof}
Let \(\Delta=h-h'\). Then
\[
S_k(y)-S_k(y')
=
2\langle u_k,\Delta\rangle
-
\lambda(\|h\|_2^2-\|h'\|_2^2).
\]
Since \(\|u_k\|_2=\sqrt D\) and
\[
|\|h\|_2^2-\|h'\|_2^2|
\le
(\|h\|_2+\|h'\|_2)\|\Delta\|_2,
\]
we obtain \eqref{eq:score-lipschitz-generic}. Applying Lemma~\ref{lem:normalized-sketch-edit-lipschitz} gives \eqref{eq:penalized-edit-bound-length-changing} and \eqref{eq:penalized-edit-bound-substitution}.
\end{proof}

\bibliographystyle{plainnat}
\bibliography{global_sketch_wm}

\end{document}